\newtheorem{corollary}{Corollary}
\newtheorem{definition}{Definition}
\newtheorem{lemma}{Lemma}
\newtheorem{remark}{Remark}
\newtheorem{theorem}{Theorem}
\begin{document}
\title{{\bf Distributed Grover's algorithm}}
\author{Daowen Qiu$^{1,3,*}$, Le Luo$^{2,3}$, Ligang Xiao$^{1,3}$\\
\small{$^{1}$School of Computer Science and Engineering, Sun Yat-sen University, Guangzhou,} {\small China}\\
\small{$^{2}$School of Physics and Astronomy, Sun Yat-sen University, {\rm  Zhuhai}, }{\small China}\\
\small{$^{3}$QUDOOR Technologies Inc.,  {\rm Guangzhou},} {\small China}\\
{\small $^*$Corresponding author.  E-mail address: issqdw@mail.sysu.edu.cn}}

\date{  }

\maketitle
\begin{center}
\begin{minipage}{130mm}
\begin{center}{\bf Abstract}\end{center}
{\small
\vskip 1mm \hskip 4.7mm 
 Let Boolean function $f:\{0,1\}^n\longrightarrow \{0,1\}$ where $|\{x\in\{0,1\}^n| f(x)=1\}|=a\geq 1$. To search for an $x\in\{0,1\}^n$ with $f(x)=1$, by Grover's algorithm we can get the objective with query times $\lfloor \frac{\pi}{4}\sqrt{\frac{2^n}{a}} \rfloor$. In this paper, we propose a distributed Grover's algorithm for computing $f$ with lower query times and smaller number of input bits. More exactly, for any $k$ with $n>k\geq 1$, we can decompose $f$ into $2^k$ subfunctions, each which has $n-k$ input bits, and then the objective can be found out by computing these subfunctions with query times at most $\sum_{i=1}^{r_i}  \lfloor \frac{\pi}{4}\sqrt{\frac{2^{n-k}}{b_i}} \rfloor+\lceil\sqrt{2^{n-k}}\rceil+2t_a+1$ for some $1\leq b_i\leq a$ and $r_i\leq 2t_a+1$, where $t_a=\lceil 2\pi\sqrt{a}+11\rceil$. In particular, if $a=1$, then our distributed Grover's algorithm only needs $\lfloor \frac{\pi}{4}\sqrt{2^{n-k}} \rfloor$ queries, versus $\lfloor \frac{\pi}{4}\sqrt{2^{n}} \rfloor$ queries of Grover's algorithm. 
 Finally, we propose an efficient algorithm of constructing quantum circuits for  realizing the oracle corresponding to any Boolean function with conjunctive normal form (CNF).

}

\par
\vskip 2mm {\bf Keywords:} Distributed quantum computing; Grover's algorithm;   Quantum amplitude  estimation   
\vskip 2mm

\end{minipage}
\end{center}
\vskip 10mm

\section{ Introduction}

Quantum computers  were first considered by Benioff \cite{Ben80},
and then suggested by Feynman \cite{Fey82}  in 1982. By
formalizing Benioff and Feynman's ideas,  Deutsch \cite{Deu85} in 1985
proved the existence of universal {\it quantum
Turing machines} (QTMs) and proposed the quantum Church-Turing Thesis. Subsequently, Deutsch \cite{Deu89}
considered quantum network models. In 1993, Yao \cite{Yao93}
elaborated on  the simulation of QTMs by quantum circuits (recently simulation has been further improved \cite{MW18}).
Universal QTMs simulating other QTMs with polynomial time was proved
by Bernstein and Vazirani \cite{BV97}.

The Deutsch-Jozsa algorithm was proposed by Deutsch and Jozsa in 1992 \cite{DJ92} and improved by Cleve,  Ekert, Macchiavello, and  Mosca in 1998 \cite{CEMM98}. For determining whether a function $f:\{0,1\}^n\longrightarrow \{0,1\}$ is constant or balanced, the  Deutsch-Jozsa algorithm \cite{DJ92,CEMM98} solves exactly the Deutsch-Jozsa problem   with exact quantum 1-query, but the classical  algorithm requires $2^{n-1}+1$ queries to compute it deterministically.  It presents a basic procedure of quantum algorithms, and in a way provides inspiration for Simon's algorithm, Shor's algorithm, and Grover's algorithm \cite{KLM07,NC00}.

 Deutsch-Jozsa problem and Simon problem have been further studied.  In fact,  it was proved that all  symmetric partial Boolean functions with exact quantum 1-query complexity can be computed by the  Deutsch-Jozsa algorithm \cite{QZ20}, and  further generalization of Deutsch-Jozsa problem was studied in \cite{QZ18}. An optimal separation between exact quantum query complexity and classical deterministic query complexity for Simon problem was obtained in \cite{CQ18}.

Grover's algorithm can find out one target element in an unordered database if the sum of all target elements is known.  More formally,  let a Boolean function $f:\{0,1\}^n\longrightarrow \{0,1\}$ where $|\{x\in\{0,1\}^n| f(x)=1\}|=a\geq 1$. To search for an $x\in\{0,1\}^n$ with $f(x)=1$, by Grover's algorithm we can get the objective with  $\lfloor \frac{\pi}{4}\sqrt{\frac{2^n}{a}} \rfloor$ queries, and the success probability is close to $1$. However, any classical algorithm to solve it needs $\Omega(2^n)$ queries. Grover's algorithm with zero theoretical failure rate was considered by Long \cite{Long01}.  The problem of operator coherence in Grover's algorithm was considered in \cite{PQ19}.

After Grover's algorithm, the algorithms of quantum amplitude amplification and estimation \cite{BHMT02} were proposed and developed. The algorithm of quantum amplitude amplification is a generalization of Grover's algorithm. We describe the algorithm of quantum amplitude estimation \cite{BHMT02} roughly. Given a Boolean function $f:\{0,1\}^n\longrightarrow \{0,1\}$, and a quantum algorithm ${\cal A}$ acting on $|0^n\rangle$, then we hope to get the quantity of information for $f(x)=1$ from ${\cal A}|0^n\rangle$. Actually, the algorithm of quantum amplitude estimation (more exactly, quantum counting) \cite{BHMT02} can answer this problem by making $\lceil\sqrt{2^{n}}\rceil$ queries on $f$ with high success probability.
Therefore, if the sum of all target elements in an unordered database  is not known, then  we can employ the algorithm of quantum counting \cite{BHMT02} to estimate it  with high success probability (see Corollary \ref{estimate}).

 The exponential speed-up of Shor's quantum algorithm for factoring
integers in polynomial time \cite{Sho97} and afterwards Grover's
algorithm of searching in database of size $N$ with only
$O(\sqrt{N})$ accesses \cite{Gro96} have already shown great advantages of quantum computing over classical computing, but nowadays  it is still difficult to build large-scale universal quantum computers due to noise and depth of quantum circuits. So, in the NISQ (Noisy Intermediate-scale Quantum) era, developing new quantum algorithms and models with better physical realizability is an intriguing and promising research field, and distributed quantum computing is such a feasible and useful subject.

Distributed quantum computing has been studied from different methods and points of view (for example,  \cite{ACR21, BBG13, LQ17, TXQ22, YL04}). As we are aware, there are three methods for distributed quantum computing in general. One way is directly to divide the quantum circuit for computing a problem into multiple quantum sub-circuits, but quantum communications (such as quantum teleportation) are needed among quantum sub-circuits, for example, a distributed Shor's algorithm in \cite{YL04}. However, the price of this method is more teleportations to be paid; the second method is to get multiple local solutions by using similar quantum algorithms to the original one and then conclude the final solution of problem, for example, distributed quantum phase estimation in \cite{LQ17}; the third method proposed recently is decomposing a Boolean function to be computed into multiple subfunctions \cite{ACR21}, and then computing these (all or partial) subfunctions to get the solution of original problem (for example \cite{TXQ22}). In fact, we have proposed a distributed algorithm for solving Simon's problem, where there exist actually entanglement among those oracles for querying subfunctions.

 In this paper, we use the third method to design a distributed Grover's algorithm, but these oracles for querying all subfunctions can work separately in our algorithm, that is, without entnaglement between these oracles. In addition, we propose an efficient algorithm with time complexity $O(m  \log m)$ for realizing the oracle (i.e., unitary operator $Z_f$), where  $Z_f|x\rangle=(-1)^{f(x)}|x\rangle$, and $f$ is a Boolean fucntion  with  conjunctive normal form (CNF) having $m$ clauses.

The remainder of the paper is organized as follows. In Section 2, we recall the  Grover's algorithm and  the algorithms of quantum amplitude  estimation and quantum counting that we will use in this paper. Then, in Section 3, we give two distributed Grover's algorithms, where one is serial and another is parallel. In our distributed algorithms, many oracles are required, so in Section 4, we propose an efficient algorithm for realizing the oracle corresponding to any Boolea function with  conjunctive normal form (CNF). Finally, in Section 5 we summarize the results obtained and mention related problems for further study.

\section{ Preliminaries}

In the interest of readability,  in this section, we briefly review the Grover's algorithm \cite{Gro96} and the algorithms of quantum amplitude  estimation and quantum counting \cite{BHMT02}.  
First we recall Grover's algorithm.

\subsection{Grover's algorithm}

 Grover's algorithm aims to search for a goal (solution) to a very
wide range of problems. We formally describe this problem  as follows.  Define a
function $f:\{0,1\}^n \rightarrow\{0,1\}$ and we assume that $f(x)=1$ for some $x\in\{0,1\}^n$ (here we consider $|\{x\in\{0,1\}^n| f(x)=1\}|=a\geq 1$). The goal is to find out one $x$ such that $f(x)=1$. So, the search problem is as follows.

Input: A function $f:\{0,1\}^n \rightarrow\{0,1\}$ with $|\{x\in\{0,1\}^n| f(x)=1\}|=a\geq 1$

Problem: Find an input $x\in\{0,1\}^n$ such that $f(x) = 1$.

Grover's algorithm performs the search quadratically faster than any classical algorithm.  In fact, any classical algorithm
finding a solution with probability at least $\frac{2}{3}$ must make
$\Omega(2^n)$ queries in the worst case, but Grover's  algorithm takes only
$O(\sqrt{2^n})$ queries, more exactly $\lfloor \frac{\pi}{4}\sqrt{\frac{2^n}{a}} \rfloor$ queries.

Although this is not the exponential quantum advantage achieved
as Shor's algorithm for factoring, the wide applicability of searching
problems makes Grover's algorithm interesting and important.
We describe the Grover's algorithm in the following Figure 1, where $N=2^n$.

\begin{figure}[H]
    \centering
    \includegraphics[scale=0.65]{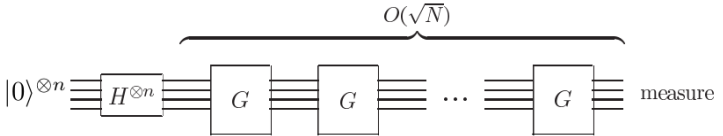}
    \caption{Grover's algorithm }
\end{figure}

\begin{algorithm}[H]
	\caption{Grover's Algorithm}\label{Grover's Algorithm}
	\LinesNumbered 
	\KwIn{A function $f:\{0,1\}^n \rightarrow\{0,1\}$ with $|\{x\in\{0,1\}^n| f(x)=1\}|=a\geq 1$.}
	\KwOut{the string $x\in\{0,1\}^n$ such that $f(x)=1$.}
	$H^{\otimes n}|0\rangle^{\otimes n}\rightarrow\frac{1}{\sqrt{2^n}}\sum\limits_{x\in\{0 , 1\}^n}|x\rangle$.
	
	 $G$ is performed with $\lfloor \frac{\pi}{4}\sqrt{\frac{2^n}{a}} \rfloor$ times, which $G=-H^{\otimes n}Z_0H^{\otimes n}Z_f$, $Z_f|x\rangle=(-1)^{f(x)}|x\rangle$, $Z_0|x\rangle=\left\{
\begin{array}{rcl}
-|x\rangle, & &x=0^n ;\\
|x\rangle, & & x\neq 0^n.
\end{array} \right.$
	
	Measure the resulting state.
	
\end{algorithm}

Next, we recall the  analyse of  Grover's algorithm. Denote
\begin{align}
&A=\{x\in\{0 , 1\}^n | f(x)=1\} 
,\\
&B=\{x\in\{0 , 1\}^n | f(x)=0\} 
,\\
&|A\rangle=\frac{1}{\sqrt{a}}\sum\limits_{x\in A}|x\rangle,\\
&|B\rangle=\frac{1}{\sqrt{b}}\sum\limits_{x\in B}|x\rangle.
\end{align}

Then after step 1 of  algorithm, the quantum state is
\begin{equation}
H^{\otimes n}|0\rangle^{\otimes n}=\sqrt{\frac{a}{N}}|A\rangle+\sqrt{\frac{b}{N}}|B\rangle\triangleq|h\rangle,
\end{equation}
where $a=|A|$, $b=|B|$, $N=2^n$.

It is easy to know that $Z_0=I-2|0^n\rangle\langle 0^n|$, so
\begin{align}
H^{\otimes n}Z_0H^{\otimes n}&=H^{\otimes n}(I-2|0^n\rangle\langle 0^n|)H^{\otimes n}\\
&=I-2|h\rangle\langle h|.
\end{align}
 Let $\sin{\theta}=\sqrt{\frac{a}{N}}$, $\cos{\theta}=\sqrt{\frac{b}{N}}$.
Then
we have
\begin{equation}
G^k|h\rangle=\cos{((2k+1)\theta)}|B\rangle+\sin{((2k+1)\theta)}|A\rangle.
\end{equation}
Our goal is to make $\sin{((2k+1)\theta)}\approx 1$,  that is $(2k+1)\theta\approx\frac{\pi}{2}$, so $k\approx\frac{\pi}{4\theta}-\frac{1}{2}$.


Since $\theta=\sin^{-1}{\sqrt{\dfrac{a}{N}}}\approx    \sqrt{\dfrac{a}{N}} $,  we obtain $k=\lfloor \frac{\pi}{4}\sqrt{\frac{2^n}{a}} \rfloor$, and  the probability of success is:
\begin{equation}
\sin^2{((2\lfloor \pi\sqrt{N}/4\sqrt{a} \rfloor+1)\sin^{-1}(\sqrt{a}/\sqrt{N}))},
\end{equation}
where the probability is close to $1$.

\subsection{ Algorithm of quantum amplitude estimation}

In this subsection  we introduce the algorithm of quantum amplitude estimation \cite{BHMT02}.

Let $f:\{0,1\}^n\rightarrow\{0,1\}$. Suppose 
we have a quantum algorithm $\mathcal{A}$ without measurements  such that $\mathcal{A}|0^n\rangle=|\Psi\rangle$. Further, there are:
\begin{equation}|\Psi\rangle=|\Psi_1\rangle+|\Psi_0\rangle, \end{equation} where
\begin{align}
&|\Psi_1\rangle=\sum\limits_{x:f(x)=1}\alpha_x|x\rangle,\\
&|\Psi_0\rangle=\sum\limits_{x:f(x)=0}\alpha_x|x\rangle.
\end{align}
Denote

 \begin{equation}a_g=\langle\Psi_1|\Psi_1\rangle , 0<a_g<1. \end{equation}
Let
\begin{align*}
&Q=\mathcal{A}U_0^{\perp}\mathcal{A}^{-1}U_f,\\
&U_0^{\perp}=2|0^n\rangle \langle0^n|-I,\\
&U_f|x\rangle=(-1)^{f(x)}|x\rangle.
\end{align*}
Define
\begin{equation}
QFT_{2^m}|j\rangle=\frac{1}{\sqrt{2^m}}\sum\limits_{k=0}^{2^m-1} e^{2\pi ijk/2^m}|k\rangle,
\end{equation}
where $0\leq j<2^m-1$, and it follows that
\begin{equation}
QFT_{2^m}^{\dagger}|j\rangle=\frac{1}{\sqrt{2^m}}\sum\limits_{k=0}^{2^m-1} e^{-2\pi ijk/2^m}|k\rangle,
\end{equation}
where $0\leq j<2^m-1$.
Suppose   unitary transformation $Q$ acts on $n$ qubits, $m$ is a positive integer,  $\Lambda_{2^m}(Q)$ represents  a unitary transformation acting on $m+n$ qubits and is defined as:
\begin{equation}
\Lambda_{2^m}(Q)|j\rangle|\psi\rangle=|j\rangle(Q^j|\psi\rangle),
\end{equation}
where $j\in\{0, \cdots, 2^m-1\}$, $|\psi\rangle$ is the  state corresponding to the $n$ qubit.

When $a_g$ is unknown, we can use quantum amplitude estimation to get an approximation of $a_g$.

\begin{figure}[H]
    \centering
    \includegraphics[scale=0.6]{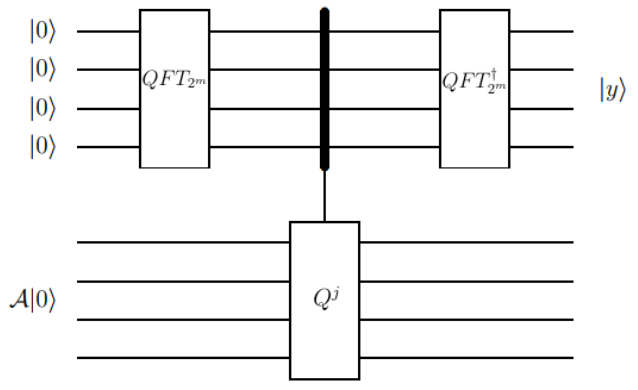}
    \caption{Algorithm of quantum amplitude estimation }
\end{figure}

\begin{algorithm}[H] \label{AE}
	\caption{Algorithm of quantum amplitude estimation ${\rm Est\_ Amp}(\mathcal{A}, f, 2^m)$}
	\LinesNumbered 
	\KwIn{$\mathcal{A}$, $f$, $2^m$.}
	\KwOut{$\widetilde{a}=\sin^2(\pi\frac{y}{2^m})$.}
	 Initialize two registers  to the state $|0^m\rangle\mathcal{A}|0^n\rangle$.
	
	 Apply $QFT_{2^m}$ to the first register, $(QFT_{2^m}\otimes I)|0^m\rangle\mathcal{A}|0^n\rangle$.
	
	 Apply $\Lambda_{2^m}(Q)$  to the state after step 2.
	
	 Apply $(QFT_{2^m}^{\dagger}\otimes I)$  to the state after step 3.
	
	 Measure the first register and denote the outcome $|y\rangle$.
	
	 Output $\widetilde{a}=\sin^2(\pi\frac{y}{2^m})$.
	
\end{algorithm}

To illustrate the above algorithm, we recall the following theorem \cite{BHMT02}.

\begin{theorem} \cite{BHMT02} \label{AEP}
For any positive integer $k$, the algorithm ${\rm Est\_Amp}(\mathcal{A}, f, 2^m)$ outputs $\widetilde{a}$ $(0\leq \widetilde{a}\leq 1)$  such that
\begin{align}
|\widetilde{a}-a_g|\leq2\pi k\frac{\sqrt{a_g(1-a_g)}}{2^m}+k^2\frac{\pi^2}{2^{2m}},
\end{align}
the probability that this holds is:

$(1)$ When $k=1$, $p\geq\frac{8}{\pi^2}$;

$(2)$ When $k\geq 2$, $p\geq1-\frac{1}{2(k-1)}$.

If $a_g = 0$ then $\widetilde{a}= 0$ with certainty, and if $a_g = 1$  then $\widetilde{a} = 1$ with certainty.
\end{theorem}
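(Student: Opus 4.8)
The algorithm ${\rm Est\_Amp}$ is quantum phase estimation applied to the operator $Q=\mathcal{A}U_0^{\perp}\mathcal{A}^{-1}U_f$, so the plan is to (i) determine the spectral decomposition of $Q$ on the relevant invariant subspace, (ii) invoke the standard phase-estimation error analysis, and (iii) translate the error in the estimated phase into an error in $\widetilde a$. First I would note that $U_f$ acts as $I-2\Pi_1$, where $\Pi_1$ projects onto $\mathrm{span}\{|x\rangle:f(x)=1\}$, and that the two-dimensional subspace $\mathcal H_\Psi=\mathrm{span}\{|\Psi_1\rangle,|\Psi_0\rangle\}$ is invariant under both $U_f$ and $\mathcal A U_0^{\perp}\mathcal A^{-1}=2|\Psi\rangle\langle\Psi|-I$ (using $\mathcal A|0^n\rangle=|\Psi\rangle\in\mathcal H_\Psi$). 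Hence $Q$ restricted to $\mathcal H_\Psi$ is a product of two reflections, i.e.\ a rotation; writing $\sin\theta_a=\sqrt{a_g}$ with $0\le\theta_a\le\pi/2$, the rotation angle is $2\theta_a$, so the eigenvalues of $Q$ on $\mathcal H_\Psi$ are $e^{\pm 2i\theta_a}$ with eigenvectors $|\psi_\pm\rangle$.

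Next I would expand $|\Psi\rangle=\mathcal A|0^n\rangle$ in the eigenbasis. Since $|\Psi\rangle=\sqrt{a_g}\,(|\Psi_1\rangle/\sqrt{a_g})+\sqrt{1-a_g}\,(|\Psi_0\rangle/\sqrt{1-a_g})$ lies in $\mathcal H_\Psi$, a short computation gives $|\Psi\rangle=\tfrac{-i}{\sqrt 2}\!\left(e^{i\theta_a}|\psi_+\rangle-e^{-i\theta_a}|\psi_-\rangle\right)$ up to a global phase, i.e.\ an equal-magnitude superposition of the two eigenvectors of $Q$. Steps 2--4 of the algorithm (the $QFT_{2^m}$, the controlled powers $\Lambda_{2^m}(Q)$, and $QFT_{2^m}^{\dagger}$) are exactly phase estimation on $Q$ with input $|\Psi\rangle$; measuring the first register therefore yields, on the $|\psi_+\rangle$ branch, an integer $y$ that is (with the claimed probabilities) within $k$ of $2^m\theta_a/\pi$, and on the $|\psi_-\rangle$ branch an integer $y$ within $k$ of $2^m(\pi-\theta_a)/\pi$; in either case $\sin^2(\pi y/2^m)$ estimates $\sin^2\theta_a=a_g$, because $\sin^2(\pi-\theta_a)=\sin^2\theta_a$. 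The quantitative input here is the classical phase-estimation bound: for integer $k\ge1$, $\mathrm{Pr}\big[\,|y-2^m\omega_\ast|\le k\,\big]\ge\frac{8}{\pi^2}$ when $k=1$ and $\ge 1-\frac{1}{2(k-1)}$ when $k\ge2$, where $\omega_\ast\in\{\theta_a/\pi,\,(\pi-\theta_a)/\pi\}$ is the phase of the corresponding eigenvalue; I would prove this by writing the post-$QFT^\dagger$ amplitude at $|y\rangle$ as a Dirichlet-kernel sum $\frac{1}{2^m}\sum_{j=0}^{2^m-1}e^{2\pi i j(\omega_\ast-y/2^m)}$ and bounding the tail away from the closest grid point.

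Finally I would convert the phase error to the amplitude error. With $\omega=\pi y/2^m$, the identity $\sin^2\omega-\sin^2\theta_a=\sin(\omega+\theta_a)\sin(\omega-\theta_a)$ together with $|\sin(\omega+\theta_a)|\le|\sin 2\theta_a|+|\omega-\theta_a|=2\sqrt{a_g(1-a_g)}+|\omega-\theta_a|$ and $|\sin(\omega-\theta_a)|\le|\omega-\theta_a|$ gives
\begin{equation*}
|\widetilde a-a_g|\le 2\sqrt{a_g(1-a_g)}\,|\omega-\theta_a|+|\omega-\theta_a|^2 ,
\end{equation*}
and substituting $|\omega-\theta_a|\le\pi k/2^m$ (which holds with the stated probability) yields the claimed inequality. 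The degenerate cases are immediate: if $a_g=0$ then $|\Psi_1\rangle=0$, $U_f$ fixes $|\Psi\rangle$, so $Q|\Psi\rangle=\mathcal A U_0^{\perp}|0^n\rangle=|\Psi\rangle$, phase estimation returns $y=0$ with certainty, and $\widetilde a=0$; if $a_g=1$ then $U_f|\Psi\rangle=-|\Psi\rangle$, $Q|\Psi\rangle=-|\Psi\rangle$, phase estimation returns $y=2^{m-1}$ with certainty, and $\widetilde a=\sin^2(\pi/2)=1$.

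I expect the main obstacle to be the sharp probability analysis of phase estimation for a phase $\omega_\ast$ that is not a dyadic rational: obtaining the precise constants $\frac{8}{\pi^2}$ (for $k=1$) and $1-\frac{1}{2(k-1)}$ (for $k\ge2$) requires a careful estimate of the Dirichlet-kernel sum — in particular controlling the $\ell=\pm k$ endpoint terms and the $\sum_{|\ell|>k}\ell^{-2}$-type tail — rather than the looser $O(1/k)$ bound usually quoted. A secondary subtlety is bookkeeping the two eigenvalue branches $e^{\pm 2i\theta_a}$ coherently and verifying that both produce the same estimate $\widetilde a=\sin^2(\pi y/2^m)$, so that their success probabilities add rather than being halved.
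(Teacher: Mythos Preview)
Your proof is correct and is in fact the standard argument from \cite{BHMT02}. However, note that the paper itself does not prove this theorem: it is stated with the citation \cite{BHMT02} and merely recalled as background for the subsequent quantum counting results, so there is no ``paper's own proof'' to compare against. What you have written is essentially the original Brassard--H{\o}yer--Mosca--Tapp proof: diagonalize $Q$ on the two-dimensional invariant subspace spanned by $|\Psi_0\rangle,|\Psi_1\rangle$ to exhibit eigenvalues $e^{\pm 2i\theta_a}$, apply phase estimation with the Dirichlet-kernel analysis to get the probability bounds, and convert the phase error into an amplitude error via the identity $\sin^2\omega-\sin^2\theta_a=\sin(\omega+\theta_a)\sin(\omega-\theta_a)$. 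Your handling of the degenerate cases $a_g\in\{0,1\}$ and your observation that the two eigenbranches yield the same $\sin^2(\pi y/2^m)$ are both accurate, and your identification of the sharp-constant Dirichlet tail bound as the technical heart is exactly right.
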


We recall  the first stage of quantum phase estimation procedure (Figure 3) that can implement $\Lambda_{2^m}(Q)$, where we replace $U$ by $Q$ in Figure 3. Each $Q$ contains one $U_f$, so the total number of queries on $U_f$ is $\sum_{i=0}^{m-1}2^i=2^m-1$.

\begin{figure}[H]
    \centering
    \includegraphics[scale=0.6]{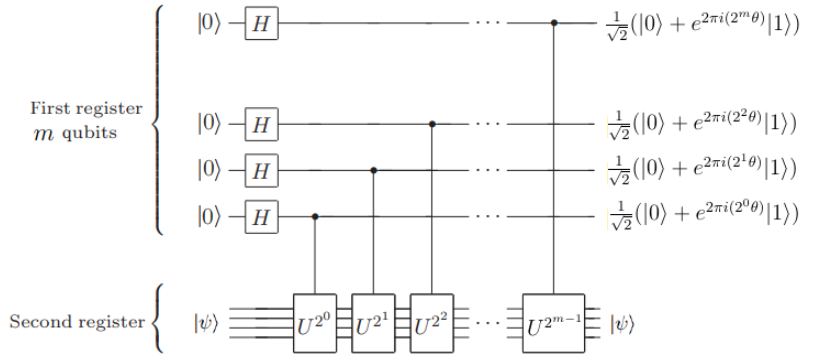}
    \caption{The first stage of quantum phase estimation procedure}
\end{figure}

A straightforward application of this algorithm is to approximately count the number of solutions $t$ to $f(x) = 1$ where $f : \{0,1\}^n\rightarrow\{0,1\}$. We then have $a_g=\langle\Psi_1|\Psi_1\rangle=\frac{t}{2^n}$. Actually, by means of the above theorem, Brassard et al \cite{BHMT02} further presented an algorithm of quantum counting  and obtained the following theorem. Here let $\mathcal{A}$ be  Hadamard transformation $H^{\otimes n}$.


\begin{algorithm}[H]
	\caption{Quantum counting algorithm ${\rm \textbf{Count}}(\mathcal{A}, f, 2^m)$}
	\LinesNumbered 
	\KwIn{$\mathcal{A}$, $f$, $2^m$.}
	\KwOut{$t'=2^n\times \sin^2(\pi\frac{y}{2^m})$.}
    Initialize two registers  to the state $|0^m\rangle\mathcal{A}|0^n\rangle$.
	
	 Apply ${\rm{QFT}}_{2^m}$ to the first register, $({\rm{QFT}}_{2^m}\otimes I)|0^m\rangle\mathcal{A}|0^n\rangle$.
	
	 Apply $\Lambda_{2^m}(Q)$  to the state after step 2.
	
	 Apply $({\rm{QFT}}_{2^m}^{\dagger}\otimes I)$  to the state after step 3.
	
	 Measure the first register and denote the outcome $|y\rangle$.
	
	 Output $t'=2^n\times \sin^2(\pi\frac{y}{2^m})$.
\end{algorithm}
By Theorem \ref{AEP},  we obtain the following theorem  by taking  $\mathcal{A}$ to be the Hadamard transformation $H^{\otimes n}$.

\begin{theorem}
For any positive integer $k$, and any Boolean function $f : \{0,1\}^n\rightarrow\{0,1\}$,  the algorithm ${\rm \textbf{Count}}(H^{\otimes n}, f, 2^m)$ outputs an estimate $t'$ to $t=|f^{-1}(1)|$ such that
\begin{align}
|t'-t|\leq2\pi k\frac{\sqrt{t(2^n-t)}}{2^m}+k^2\frac{\pi^2 2^n}{2^{2m}},
\end{align}
the probability that this holds is:

$(1)$ When $k=1$ , $p\geq\frac{8}{\pi^2}$;

$(2)$ When $k\geq 2$ , $p\geq1-\frac{1}{2(k-1)}$.

If $t = 0$ then $t'= 0$ with certainty, and if $t = 2^n$  then $t'= 2^n$ with certainty.
\end{theorem}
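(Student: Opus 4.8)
The plan is to obtain this statement as an immediate specialization of Theorem \ref{AEP}, since $\textbf{Count}(H^{\otimes n}, f, 2^m)$ is literally $\mathrm{Est\_Amp}(\mathcal{A}, f, 2^m)$ with $\mathcal{A} = H^{\otimes n}$ followed by a rescaling of the output register. Thus the entire analysis of $\mathrm{Est\_Amp}$ applies verbatim, and the only work is to translate the amplitude quantity $a_g$ into the counting quantity $t$ and to push the error bound and probabilities through the rescaling $t' = 2^n\widetilde a$.

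Concretely, I would first note that with $\mathcal{A} = H^{\otimes n}$ we have $\mathcal{A}|0^n\rangle = 2^{-n/2}\sum_{x\in\{0,1\}^n}|x\rangle$, so in the notation of Section 2.2, $|\Psi_1\rangle = 2^{-n/2}\sum_{x:f(x)=1}|x\rangle$ and hence $a_g = \langle\Psi_1|\Psi_1\rangle = t/2^n$ with $t = |f^{-1}(1)|$. The algorithm $\textbf{Count}$ returns $t' = 2^n\widetilde a$, where $\widetilde a = \sin^2(\pi y/2^m)$ is exactly the output of $\mathrm{Est\_Amp}$. Multiplying the inequality of Theorem \ref{AEP} by $2^n$ gives $|t'-t| = 2^n|\widetilde a - a_g| \le 2^n\cdot 2\pi k\,\sqrt{a_g(1-a_g)}/2^m + 2^n k^2\pi^2/2^{2m}$; since $2^n\sqrt{a_g(1-a_g)} = \sqrt{(2^n a_g)(2^n(1-a_g))} = \sqrt{t(2^n-t)}$, the right-hand side is precisely $2\pi k\,\sqrt{t(2^n-t)}/2^m + k^2\pi^2 2^n/2^{2m}$, as claimed. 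The success-probability bounds are inherited unchanged, because the event "$y$ yields an estimate within the stated tolerance" is the same event as in Theorem \ref{AEP}: the bound $p\ge 8/\pi^2$ for $k=1$ and $p\ge 1 - 1/(2(k-1))$ for $k\ge 2$ therefore transfer directly.

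Finally I would dispatch the two boundary cases: if $t = 0$ then $a_g = 0$, and Theorem \ref{AEP} guarantees $\widetilde a = 0$ with certainty, so $t' = 2^n\cdot 0 = 0$; symmetrically, $t = 2^n$ forces $a_g = 1$, hence $\widetilde a = 1$ with certainty and $t' = 2^n$. There is no genuine obstacle here — the proof is a change of variables $a_g = t/2^n$ together with a rescaling of the output — and the only point requiring a moment's care is confirming that the deterministic behaviour at $a_g \in \{0,1\}$ is pulled across correctly, which is exactly why those two cases are stated separately rather than folded into the general error bound.
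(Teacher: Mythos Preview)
Your proposal is correct and matches the paper's approach exactly: the paper simply states that the theorem follows from Theorem~\ref{AEP} by taking $\mathcal{A}=H^{\otimes n}$, and you have filled in the straightforward rescaling $a_g=t/2^n$, $t'=2^n\widetilde a$ that makes this explicit. There is nothing to add.
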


In particular, if we want to estimate $t$ within a few  deviations, we can apply algorithm ${\rm \textbf{Count}}(\mathcal{A}, f, 2^m)$ with $2^m=\lceil\sqrt{2^n}\rceil$. This is the following corollary  further presented in \cite{BHMT02}.

\begin{corollary} \cite{BHMT02} \label{estimate}
Given a Boolean function $f : \{0,1\}^n\rightarrow\{0,1\}$ with $t=|\{x\in\{0,1\}^n| f(x)=1\}|\geq 1$, there is an algorithm of quantum counting ${\rm \textbf{Count}}(H^{\otimes n}, f, \lceil\sqrt{2^n}\rceil)$ requiring exactly  $\lceil\sqrt{2^n}\rceil$ queries of  $f$ and outputting
integer number $t'$ such that
\begin{align}
|t'-t|\leq2\pi \sqrt{\frac{t(2^n-t)}{2^n}}+11
\end{align}
with probability at least $\frac{8}{\pi^2}$.  In particular, if $t = 0$ then $t'= 0$ is determined with certainty, and if $t = 2^n$  then $t'= 2^n$ is determined  with certainty.

\end{corollary}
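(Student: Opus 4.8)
The plan is to obtain this purely as a specialization of the quantum counting theorem stated just above (which itself is Theorem~\ref{AEP} instantiated with $\mathcal{A}=H^{\otimes n}$), taking the parameters $k=1$ and $M:=2^m=\lceil\sqrt{2^n}\rceil$, and then simplifying the error term into the clean form claimed. So first I would invoke that theorem with these choices: running ${\rm \textbf{Count}}(H^{\otimes n},f,M)$ produces a value $t'$ with
\[
|t'-t|\le 2\pi\,\frac{\sqrt{t(2^n-t)}}{M}+\frac{\pi^2\,2^n}{M^2}
\]
holding with probability at least $\frac{8}{\pi^2}$ (this is exactly the $k=1$ branch of the probability statement).

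Next I would bound the right-hand side. Since $M=\lceil\sqrt{2^n}\rceil\ge\sqrt{2^n}$, we have $\frac{1}{M}\le\frac{1}{\sqrt{2^n}}$ and $\frac{2^n}{M^2}\le 1$, so the first term is at most $2\pi\sqrt{\tfrac{t(2^n-t)}{2^n}}$ and the second is at most $\pi^2<10$. To turn the raw output $t'=2^n\sin^2(\pi y/M)$ into an \emph{integer}, I would round to the nearest integer, which adds at most $\tfrac12$ to the error; since $\pi^2+\tfrac12<11$, the rounded integer still satisfies $|t'-t|\le 2\pi\sqrt{\tfrac{t(2^n-t)}{2^n}}+11$. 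For the query count I would use the remark preceding the corollary: each $Q$ calls $U_f$ once, and $\Lambda_M(Q)$ applies the controlled powers $Q^{2^0},\dots,Q^{2^{m-1}}$, i.e. $\sum_{i=0}^{m-1}2^i=M-1\le \lceil\sqrt{2^n}\rceil$ invocations of $U_f$. Finally, the boundary cases $t=0$ and $t=2^n$ (equivalently $a_g=0$ and $a_g=1$) are inherited verbatim from the ``with certainty'' clauses of Theorem~\ref{AEP}.

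I do not expect a genuine conceptual obstacle here, since the corollary is essentially a packaging of the $k=1$ case with a convenient choice of $m$. The only place that needs care is the numerical bookkeeping: verifying that after replacing $1/M$ and $2^n/M^2$ by their crude upper bounds and after the integer rounding, the total accumulated slack still fits under the advertised constant $11$ (indeed $\pi^2+\tfrac12\approx 10.37\le 11$), and that the number of $U_f$-calls hidden inside $\Lambda_M(Q)$ really does not exceed $\lceil\sqrt{2^n}\rceil$.
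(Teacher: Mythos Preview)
Your proposal is correct and follows exactly the approach the paper indicates: the paper does not give a standalone proof of this corollary but simply remarks that one applies ${\rm \textbf{Count}}(\mathcal{A},f,2^m)$ with $2^m=\lceil\sqrt{2^n}\rceil$ and defers to \cite{BHMT02}. Your write-up supplies the missing bookkeeping (the $k=1$ probability branch, the bounds $1/M\le 1/\sqrt{2^n}$ and $2^n/M^2\le 1$, the rounding to an integer, the query count via $\sum_{i=0}^{m-1}2^i$, and the certainty clauses for $t\in\{0,2^n\}$), which is precisely what a full proof along the paper's line would contain.
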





\section{Distributed Grover's algorithm}

Let Boolean function $f:\{0,1\}^n \rightarrow\{0,1\}$, and suppose $|\{x\in\{0,1\}^n| f(x)=1\}|=a\geq 1$. For any $n>k\geq 1$, we divide $f$ into $2^k$ subfunctions as follows.

For any $i\in [0,2^k-1]$, we identify $y_i\in\{0,1\}^k$ with the binary representation of $i$, and then define Boolean function $f_i:\{0,1\}^{n-k}\rightarrow \{0,1\}$ as follows: For any $x\in\{0,1\}^{n-k}$,

\begin{equation} \label{DF}
f_i(x)=f(xy_i).
\end{equation}

If we can find out some $x\in\{0,1\}^{n-k}$ such that $f_i(x)=1$, then the solution has been discovered. However, if  function $f_i(x)=0$ for  any $x\in\{0,1\}^{n-k}$, then we cannot get any useful solution from it.  We employ the
 algorithm of quantum counting \cite{BHMT02} to determine the number of ``good" elements for each subfunction, where a ``good" element means an input string mapping to function  value $1$. If the subfunction is constant to $0$, then the algorithm of quantum counting can determine this subfunction is constant to $0$ exactly without error; if the subfunction is not constant to $0$, then  the  algorithm of quantum counting can determine the number of ``good" elements of subfunction with success probability at least $\frac{8}{\pi^2}$.

After that, for a subfunction having function value $1$, we use Grover's algorithm to find out a solution that is the goal of the original problem as well.

So, before we perform the Grover's algorithm for a subfunction, we use the algorithm of quantum counting \cite{BHMT02} to determine the number of ``good" elements of subfunctions.

First we present a lemma. Here we denote $|\{x\in\{0,1\}^n| f(x)=1\}|=a\geq 1$ as above.

\begin{lemma} \label{Count}
 Let Boolean function $f:\{0,1\}^n \rightarrow\{0,1\}$. Given $n>k\geq 1$,
for any integer $i\in [0,2^k-1]$, denote $a_i=|\{x\in\{0,1\}^{n-k}|f_i(x)=1\}|$ and $t_a=\lceil 2\pi\sqrt{a}+11\rceil$. Then the algorithm of quantum counting ${\rm \textbf{Count}}(H^{\otimes (n-k)}, f_i, \lceil\sqrt{2^{n-k}}\rceil)$
 can outputs an integer number $a_i'$ such that  $a_i\in \{a_i'-t_a, a_i'-t_a+1,\ldots, a_i'+t_a\}$  with $\lceil\sqrt{2^{n-k}}\rceil$ queries to $f_i$, and the success probability is at least $\frac{8}{\pi^2}$. In particular, if $a_i = 0$ then $a_i'= 0$ is determined with certainty, and if $a_i = 2^{n-k}$  then $a_i'= 2^{n-k}$ is determined  with certainty.

\end{lemma}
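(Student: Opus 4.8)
The plan is to instantiate Corollary \ref{estimate} to each subfunction $f_i:\{0,1\}^{n-k}\to\{0,1\}$ in place of the generic $f$, with $2^{n-k}$ playing the role of $2^n$ and $a_i$ playing the role of $t$, and then translate the error bound into the claimed symmetric window of half-width $t_a$. The query count $\lceil\sqrt{2^{n-k}}\rceil$ and the success probability $\tfrac{8}{\pi^2}$ come verbatim from the corollary; the edge cases $a_i=0$ and $a_i=2^{n-k}$ (certainty of the estimate) are also immediate from the corollary. So the only real content is bounding the right-hand side $2\pi\sqrt{a_i(2^{n-k}-a_i)/2^{n-k}}+11$ by $t_a=\lceil 2\pi\sqrt{a}+11\rceil$.

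First I would observe that $a_i\le a$: this is because the sets $\{x\in\{0,1\}^{n-k}: f_i(x)=1\}$ for $i=0,\ldots,2^k-1$ are the fibers of the partition of $f^{-1}(1)\subseteq\{0,1\}^n$ according to the last $k$ bits $y_i$, so $\sum_{i=0}^{2^k-1} a_i = a$, and in particular each $a_i\le a$. Next, $\sqrt{a_i(2^{n-k}-a_i)/2^{n-k}}=\sqrt{a_i}\cdot\sqrt{1-a_i/2^{n-k}}\le \sqrt{a_i}\le\sqrt{a}$. Hence the error bound from Corollary \ref{estimate} applied to $f_i$ gives $|a_i'-a_i|\le 2\pi\sqrt{a}+11\le \lceil 2\pi\sqrt{a}+11\rceil = t_a$. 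Since $a_i$ and $a_i'$ are both integers, $|a_i'-a_i|\le t_a$ means precisely $a_i\in\{a_i'-t_a,\ldots,a_i'+t_a\}$, which is the assertion.

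There is essentially no hard step here; the one point that needs a word of care is making sure the bound $2\pi\sqrt{a}+11$ is large enough to absorb the ``$+11$'' term of Corollary \ref{estimate}, which already accounts for the $k^2\pi^2 2^n/2^{2m}$ term with $k=1$ and $2^m=\lceil\sqrt{2^n}\rceil$. Since Corollary \ref{estimate} is quoted as is with the constant $11$ already baked in, nothing extra is needed: applying it to $f_i$ with $n-k$ in place of $n$ yields the ``$+11$'' directly, and we only enlarge $\sqrt{a_i(2^{n-k}-a_i)/2^{n-k}}$ up to $\sqrt{a}$ using $a_i\le a$. Taking the ceiling to keep $t_a$ an integer completes the argument; the success probability and query count are inherited unchanged from the corollary.
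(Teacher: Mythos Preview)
Your proposal is correct and follows essentially the same approach as the paper: apply Corollary~\ref{estimate} to $f_i$ with $n-k$ in place of $n$, then bound $\sqrt{a_i(2^{n-k}-a_i)/2^{n-k}}\le\sqrt{a_i}\le\sqrt{a}$ using $a_i\le a$ to reach $|a_i'-a_i|\le 2\pi\sqrt{a}+11\le t_a$. If anything, you supply slightly more detail than the paper (the partition argument for $a_i\le a$ and the explicit ceiling step), but the substance is identical.
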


\begin{proof}
It follows from Corollary \ref{estimate} by taking $f_i$ there. More specifically, for inputting function $f_i$, by virtue of the algorithm of quantum counting ${\rm \textbf{Count}}(H^{\otimes (n-k)}, f_i, \lceil\sqrt{2^{n-k}}\rceil)$, an integer number $a_i'$ is outputted and satisfies
\begin{align}
|a_i'-a_i|&\leq 2\pi \sqrt{\frac{a_i(2^{n-k}-a_i)}{2^{n-k}}}+11\\
&\leq 2\pi \sqrt{a}+11\\
&=t_a,
\end{align}
where $a_i\leq a$ is used. Therefore, we have $a_i\in \{a_i'-t_a, a_i'-t_a+1,\ldots, a_i'+t_a\}$.
\end{proof}

\begin{remark}
From Lemma \ref{Count} we know that $a_i$ has at most $2t_a+1$ possible values.
\end{remark}

After $f_i$ is determined not constant to zero, we use Grover's algorithm to search for a solution $x\in\{0,1\}^{n-k}$ such that $f_i(xy_i)=1$. Then $xy_i$ is exactly the solution of the original problem.

For subfunctions $f_i$ $(i=0,1,2,\ldots,2^{k}-1)$, we can use $2^{k}$ machines in  parallel to deal with $f_i$ independently, or we use one machine only to do it in sequence. Next we analyze their query complexity respectively.

If we use $2^{k}$ machines in  parallel to compute, then once $f_i$ is determined as non-constant to zero by the algorithm of quantum counting, we further use Grover's algorithm to search for a solution. So, in this case, we need to query at most $\sum_{i=1}^{2t_a+1}  \lfloor \frac{\pi}{4}\sqrt{\frac{2^{n-k}}{b_i}} \rfloor+\lceil\sqrt{2^{n-k}}\rceil+2t_a+1$ times for some $1\leq b_i\leq a$ ($i=1,2,\ldots,2t_a+1$), where we use $\lceil\sqrt{2^{n-k}}\rceil$  queries in the algorithm of quantum counting, and $2t_a+1$ means the times to check the results for using Grover's algorithm regarding each possible number of goals $b_i$ of $f_i$.

If we use only one machine to do it with serial method, and suppose $f_i$ has no goals  (i.e., $a_i=0$) for $0\leq i\leq T\leq 2^{k}-2$, but $f_T$ does, then we need to query $\sum_{i=1}^{2t_a+1}  \lfloor \frac{\pi}{4}\sqrt{\frac{2^{n-k}}{b_i}} \rfloor+T\lceil\sqrt{2^{n-k}}\rceil+2t_a+1$ times at most (of course, if we are failure to get a solution from $f_T$, we can continue to search from $f_{T+1}$, but we do not consider this case here).


Next we give a distributed Grover's algorithm in serial method.

\begin{algorithm}[H] \label{DGAS}
	\caption{Distributed Grover's Algorithm (I)}
	\LinesNumbered 
	\KwIn{A function $f:\{0,1\}^n \rightarrow\{0,1\}$ with $f(x) = 1$  for some $x\in\{0,1\}^n$.}
	\KwOut{the string $x\in\{0,1\}^n$ such that $f(x)=1$.}
	
Given $k\in [1,n)$, decompose $f$ into  $f_i$ $(i=0,1,2,\ldots,2^{k}-1)$ as Equation (\ref{DF}).

Take $i=0$.

Use the algorithm of quantum counting ${\rm \textbf{Count}}(H^{\otimes (n-k)}, f_i, \lceil\sqrt{2^{n-k}}\rceil)$ to determine a range of $a_i=|\{x\in\{0,1\}^{n-k}|f_i(x)=1\}|$  with at most $2t_a+1$ possible values, say $\{b_1,b_2,\ldots,b_{2t_a+1}\}$ where $b_j\neq 0$ otherwise it is removed. If $a_i=0$ and $i<2^k-2$, return to Step 2 by taking $i=i+1$, otherwise go to next step by taking $j=1$.

$H^{\otimes (n-k)}|0\rangle^{\otimes (n-k)}\rightarrow\frac{1}{\sqrt{2^{n-k}}}\sum\limits_{x\in\{0 , 1\}^{n-k}}|x\rangle$.
	
	 $G_i$ is performed with $\lfloor \frac{\pi}{4}\sqrt{\frac{2^{n-k}}{b_i}} \rfloor$ times, where $G_i=-H^{\otimes (n-k)}Z_0H^{\otimes (n-k)}Z_{f_i}$, $Z_{f_i}|x\rangle=(-1)^{f_i(x)}|x\rangle$, $Z_0|x\rangle=\left\{
\begin{array}{rcl}
-|x\rangle, & &x=0^{n-k} ;\\
|x\rangle, & & x\neq 0^{n-k}.
\end{array} \right.$
	
	Measure the resulting state, and check whether it is a solution of $f_i$. If it is, then end the algorithm, otherwise take $j=j+1$ and return to Step 4.
	
\end{algorithm}

\begin{remark}

In above algorithm, we have not considered the case of being failure to get a solution from $f_T$ and then continuing to search from next subfunction.

\end{remark}

We can describe the above algorithm with the following theorem.

\begin{theorem} Given $n>k\geq 1$, let Boolean function $f:\{0,1\}^n \rightarrow\{0,1\}$ with $|x\in\{0,1\}^{n}|f(x)=1\}|=a\geq 1$. Then Algorithm  \ref{DGAS} can find out the string $x\in\{0,1\}^n$ such that $f(x)=1$ with query times at most
$\lfloor \frac{\pi}{4}\sqrt{2^{n-k}} \rfloor+(2^k-1)(4\sqrt{2^{n-k}}-1)$  and the success probability is at least
\begin{equation}
\frac{8}{\pi^2}P_{G_i},
\end{equation}
where $P_{G_i}=\sin^2{((2\lfloor\frac{ \pi\sqrt{2^{(n-k)}}}{4\sqrt{a_i}} \rfloor+1)\sin^{-1}(\frac{a_i}{\sqrt{2^{(n-k)}}}))}$ is the success probability of Grover's algorithm for computing the first $f_i$ with $f_i$ being not constant to zero.

\end{theorem}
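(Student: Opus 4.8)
The plan is to analyze Algorithm \ref{DGAS} in two parts: the correctness/success-probability claim, and the query-count bound. For success probability, I would trace the algorithm to the first subfunction $f_i$ that is not constant to zero. Lemma \ref{Count} tells us that with probability at least $\frac{8}{\pi^2}$ the quantum counting step returns a value $a_i'$ such that the true count $a_i$ lies in the window $\{a_i'-t_a,\ldots,a_i'+t_a\}$; in particular $a_i$ appears among the candidate values $\{b_1,\ldots,b_{2t_a+1}\}$ that Step 3 produces. Conditioned on that event, one of the loop iterations (Steps 4--6) runs Grover's algorithm with the \emph{correct} iteration count $\lfloor \frac{\pi}{4}\sqrt{2^{n-k}/a_i}\rfloor$, so the standard Grover analysis recalled in Section 2 (the formula $G^k|h\rangle = \cos((2k+1)\theta)|B\rangle + \sin((2k+1)\theta)|A\rangle$ with $\sin\theta = \sqrt{a_i/2^{n-k}}$) gives that this iteration outputs a genuine solution $x$ of $f_i$, hence $xy_i$ solves $f$, with probability $P_{G_i}$. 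Multiplying the two independent events yields the stated bound $\frac{8}{\pi^2}P_{G_i}$.

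For the query count I would bound the worst case: the algorithm may reach subfunctions $f_0,\ldots,f_{T}$ with $f_0,\ldots,f_{T-1}$ constant to zero. On each such rejected subfunction only the counting step is invoked, costing $\lceil\sqrt{2^{n-k}}\rceil$ queries (and by the certainty clause of Lemma \ref{Count}, a constant-zero subfunction is correctly identified, so no Grover calls are wasted there). There are at most $2^k-1$ such rejections before we must hit a subfunction with a solution. On the successful subfunction $f_T$ we pay one more counting call of $\lceil\sqrt{2^{n-k}}\rceil$ queries plus at most $2t_a+1$ Grover runs, the $j$-th costing $\lfloor\frac{\pi}{4}\sqrt{2^{n-k}/b_j}\rfloor \le \lfloor\frac{\pi}{4}\sqrt{2^{n-k}}\rfloor$ queries since $b_j\ge 1$. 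Summing, the total is at most $(2^k-1)\lceil\sqrt{2^{n-k}}\rceil + \lceil\sqrt{2^{n-k}}\rceil + (2t_a+1)\lfloor\frac{\pi}{4}\sqrt{2^{n-k}}\rfloor$. To match the clean form $\lfloor\frac{\pi}{4}\sqrt{2^{n-k}}\rfloor + (2^k-1)(4\sqrt{2^{n-k}}-1)$ claimed in the theorem, I would crudely over-estimate: $\lceil\sqrt{2^{n-k}}\rceil \le 2\sqrt{2^{n-k}}$ and, grouping the Grover runs with the counting calls, absorb the $(2t_a+1)$ Grover iterations and the counting costs into the $4\sqrt{2^{n-k}}$-per-subfunction budget, keeping one clean $\lfloor\frac{\pi}{4}\sqrt{2^{n-k}}\rfloor$ term out front for the (successful) Grover search with the correct $b_i=a_i$.

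The main obstacle is the bookkeeping in the query bound: the stated expression $\lfloor\frac{\pi}{4}\sqrt{2^{n-k}}\rfloor + (2^k-1)(4\sqrt{2^{n-k}}-1)$ is evidently a deliberately loosened bound, and I would need to check carefully that the number of Grover re-runs ($2t_a+1$, independent of $k$ and $n$) and the counting cost per subfunction are genuinely dominated by the $4\sqrt{2^{n-k}}$ allowance once the $-1$ and the leading floor term are accounted for. A subtle point worth isolating as a sub-step: Step 3 removes the candidate value $0$ from the list, so on a non-constant subfunction the Grover loop only iterates over genuinely positive candidates $b_j$, which is what legitimizes the bound $\lfloor\frac{\pi}{4}\sqrt{2^{n-k}/b_j}\rfloor\le\lfloor\frac{\pi}{4}\sqrt{2^{n-k}}\rfloor$ in every iteration. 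I would also note explicitly that the success probability claim uses the independence of the counting measurement and the Grover measurement, and that we only need the \emph{first} non-constant $f_i$ to succeed, since the remark preceding the theorem declares we do not fall through to later subfunctions.
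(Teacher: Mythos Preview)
Your proposal is correct and follows essentially the same approach as the paper: both let $T$ be the index of the first non-constant subfunction, invoke the certainty clause of Lemma~\ref{Count} to pass over the constant-zero $f_0,\ldots,f_{T-1}$ without wasting Grover calls, multiply the counting success probability $\frac{8}{\pi^2}$ by Grover's success probability $P_{G_T}$ on the correct candidate $b_j=a_T$, and bound the query total by at most $2^k$ counting calls plus $2t_a+1$ Grover runs (the paper also adds $2t_a+1$ verification queries, which you omitted). Your worry about the final bookkeeping is well placed: the paper's own proof in fact stops at the intermediate bound $(2t_a+1)\lfloor\frac{\pi}{4}\sqrt{2^{n-k}}\rfloor + 2^k\lceil\sqrt{2^{n-k}}\rceil + 2t_a+1$ and never explicitly reduces it to the form $\lfloor\frac{\pi}{4}\sqrt{2^{n-k}}\rfloor + (2^k-1)(4\sqrt{2^{n-k}}-1)$ stated in the theorem.
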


\begin{proof}

Suppose $f_T$ is not a constant function, but $f_j\equiv 0$ for $j=0,1,2,\ldots,T-1$. Then the algorithm of quantum counting can exactly determine $f_j$ is constant to zero for $j=0,1,2,\ldots,T-1$. For $f_i$, the algorithm of quantum counting outputs a range of $a_T$ with success probability at least $\frac{8}{\pi^2}$, say $\{b_1,b_2,\ldots,b_{2t_a+1}\}$ where $b_i\neq 0$ otherwise it is removed (recall $a_T=|\{x\in\{0,1\}^{n-k}|f_T(x)=1\}|$).

After that, we use Grover's algorithm to compute $f_T$ for each possible number $b_j$ of goals and get the solution with success probability $P_{G_T}$, where $$P_{G_T}=\sin^2{((2\lfloor\frac{ \pi\sqrt{2^{(n-k)}}}{4\sqrt{a_T}} \rfloor+1)\sin^{-1}(\frac{a_T}{\sqrt{2^{(n-k)}}}))}.$$
So, in this case, the number of queries is  $\sum_{i=1}^{2t_a+1}  \lfloor \frac{\pi}{4}\sqrt{\frac{2^{n-k}}{b_i}} \rfloor+T\lceil\sqrt{2^{n-k}}\rceil+2t_a+1$, where we use $T\lceil\sqrt{2^{n-k}}\rceil$  queries in the algorithm of quantum counting, and $2t_a+1$ means the times to check the results for using Grover's algorithm regarding each possible number of goals $b_i$ of $f_T$. It is clear that

\begin{align}
&\sum_{i=1}^{2t_a+1}  \lfloor \frac{\pi}{4}\sqrt{\frac{2^{n-k}}{b_i}} \rfloor+T\lceil\sqrt{2^{n-k}}\rceil+2t_a+1 \\
&\leq (2t_a+1)  \lfloor \frac{\pi}{4}\sqrt{2^{n-k}} \rfloor+2^k\lceil\sqrt{2^{n-k}}\rceil+2t_a+1.
\end{align}

Therefore, the success probability is at least $\frac{8}{\pi^2}P_{G_i}.$ Since both $\frac{8}{\pi^2}$ and $P_{G_i}$ approximate $1$, the success probability is still high.

\end{proof}
Next we give a distributed Grover's algorithm in parallel.

\begin{algorithm}[H] \label{DGAP}
	\caption{Distributed Grover's Algorithm (II)}
	\LinesNumbered 
	\KwIn{A function $f:\{0,1\}^n \rightarrow\{0,1\}$ with $f(x) = 1$  for some $x\in\{0,1\}^n$.}
	\KwOut{the string $x\in\{0,1\}^n$ such that $f(x)=1$.}
	
Given $k\in [1,n)$, decompose $f$ into  $f_i$ $(i=0,1,2,\ldots,2^{k}-1)$ as Equation (\ref{DF}).

$2^k$ machines (say $M_i,i=0,1,2,\ldots,2^{k}-1$)  separately perform the algorithm of quantum counting ${\rm \textbf{Count}}(H^{\otimes (n-k)}, f_i, \lceil\sqrt{2^{n-k}}\rceil)$ to determine a range $R_i=\{b_1^{(i)}, b_2^{(i)},\ldots,b_{r_i}^{(i)}\}$ of $a_i=|\{x\in\{0,1\}^{n-k}|f_i(x)=1\}|$, with $|r_i|\leq 2t_a+1$. If $a_i=0$, then $M_i$ stops, otherwise the rest $M_i$  continue next step in parallel.

Take $j=1$.

$H^{\otimes (n-k)}|0\rangle^{\otimes (n-k)}\rightarrow\frac{1}{\sqrt{2^{n-k}}}\sum\limits_{x\in\{0 , 1\}^{n-k}}|x\rangle$.
	
	 $G_i$ is performed with $\lfloor\frac{\pi}{4}\sqrt{\frac{2^{n-k}}{b_j}} \rfloor$ times, which $G_i=-H^{\otimes (n-k)}Z_0H^{\otimes (n-k)}Z_{f_i}$, $Z_{f_i}|x\rangle=(-1)^{f_i(x)}|x\rangle$, $Z_0|x\rangle=\left\{
\begin{array}{rcl}
-|x\rangle, & &x=0^{n-k} ;\\
|x\rangle, & & x\neq 0^{n-k}.
\end{array} \right.$
	
	Measure the resulting state, and check whether it is a solution of $f_i$. If it is, then end the algorithm, otherwise take $j=j+1$ and return to Step 4.
	
\end{algorithm}

In order to explain the above algorithm, we use the following theorem.

\begin{theorem} Given $n>k\geq 1$, let Boolean function $f:\{0,1\}^n \rightarrow\{0,1\}$ with $1\leq a=|x\in\{0,1\}^{n}|f(x)=1\}|$. Then for any $a_i=|\{x\in\{0,1\}^{n-k}|f_i(x)=1\}|$ with $a_i\geq 1$, Algorithm  \ref{DGAP} can find out the string $x\in\{0,1\}^n$ such that $f(x)=1$ with query times
$\sum_{i=1}^{r_i}  \lfloor \frac{\pi}{4}\sqrt{\frac{2^{n-k}}{b_i}} \rfloor+\lceil\sqrt{2^{n-k}}\rceil+2t_a+1$ for some $1\leq b_i\leq a$ and $r_i\leq 2t_a+1$, and the success probability is at least
\begin{equation}
\frac{8}{\pi^2}P_{G_i},
\end{equation}
where $P_{G_i}=\sin^2{((2\lfloor\frac{ \pi\sqrt{2^{(n-k)}}}{4\sqrt{a_i}} \rfloor+1)\sin^{-1}(\frac{a_i}{\sqrt{2^{(n-k)}}}))}$ is the success probability of Grover's algorithm for computing  $f_i$.

\end{theorem}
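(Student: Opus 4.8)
The plan is to trace Algorithm \ref{DGAP} stage by stage, paralleling the serial analysis but using that the $2^k$ machines act independently. First I would invoke Lemma \ref{Count}: each machine $M_i$ runs ${\rm \textbf{Count}}(H^{\otimes (n-k)}, f_i, \lceil\sqrt{2^{n-k}}\rceil)$ with exactly $\lceil\sqrt{2^{n-k}}\rceil$ queries to $f_i$, and with probability at least $8/\pi^2$ it returns an integer $a_i'$ with $a_i\in\{a_i'-t_a,\ldots,a_i'+t_a\}$; moreover $a_i=0$ is detected with certainty, so precisely the machines with $a_i\ge 1$ survive to the Grover phase. Hence on a surviving machine the candidate set $R_i$ (with the value $0$ deleted if it appears) has size $r_i\le 2t_a+1$, and on the event that its counting succeeded, $R_i$ contains the true value $a_i$.

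Next I would count queries on a fixed surviving machine $M_i$. For each $b_j\in R_i$ the algorithm prepares $H^{\otimes(n-k)}|0\rangle^{\otimes(n-k)}$, applies $G_i$ exactly $\lfloor\frac{\pi}{4}\sqrt{2^{n-k}/b_j}\rfloor$ times, measures, and spends one further query of $f_i$ to check whether the outcome $x$ satisfies $f_i(x)=1$. Summing over the at most $r_i$ candidates gives $\sum_{i=1}^{r_i}\lfloor\frac{\pi}{4}\sqrt{2^{n-k}/b_i}\rfloor$ Grover queries and at most $2t_a+1$ checking queries; adding the $\lceil\sqrt{2^{n-k}}\rceil$ queries already spent on counting yields the claimed total, with every $b_i\ge 1$ (so each floor is well defined) and $b_i\le a$ since $a_i\le a$. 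If some earlier run with a wrong $b_j$ happens to return a valid solution, the algorithm halts sooner, which only lowers the count.

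For the success probability I would condition on the event $E$ that the counting on $M_i$ returned a range containing $a_i$, which holds with probability at least $8/\pi^2$. On $E$ one of the iterations uses exactly $b_j=a_i$, and for that run the Grover analysis recalled in Section 2 --- the identity $G^k|h\rangle=\cos((2k+1)\theta)|B\rangle+\sin((2k+1)\theta)|A\rangle$ with $\sin\theta=\sqrt{a_i/2^{n-k}}$ and iteration count $\lfloor\frac{\pi}{4}\sqrt{2^{n-k}/a_i}\rfloor$ --- shows the measurement produces a string $x$ with $f_i(x)=1$ with probability exactly $P_{G_i}$. The verification then accepts, and the algorithm outputs $xy_i$, which by Equation (\ref{DF}) satisfies $f(xy_i)=f_i(x)=1$. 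Since the counting output affects the Grover stage only through the set $R_i$ of tried values, conditioning on $E$ leaves the run on $b_j=a_i$ with its full success probability, so the overall success probability is at least $\frac{8}{\pi^2}P_{G_i}$.

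The argument itself is short once the two recalled results are in place; the step that needs care is the handling of the ``wrong'' candidates $b_j\ne a_i$ together with the checking step. The point is that the post-measurement verification $f_i(x)\stackrel{?}{=}1$ is deterministic, so a run with a wrong $b_j$ can only fail the check, never produce a false positive; hence the algorithm is always correct whenever it halts, the stated expression is a genuine upper bound over all branches, and the lower bound on success needs only the single guaranteed-to-exist run with $b_j=a_i$. I would also state explicitly the independence used to multiply the $8/\pi^2$ and $P_{G_i}$ factors and note that the counting detects constant-zero subfunctions with certainty, so no spurious Grover search is ever launched.
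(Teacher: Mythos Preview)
Your proposal is correct and follows essentially the same approach as the paper's own proof: invoke the quantum counting guarantee (Lemma~\ref{Count}) to get the range $R_i$ with probability at least $8/\pi^2$ and cost $\lceil\sqrt{2^{n-k}}\rceil$, then sum the Grover iterations over the at most $2t_a+1$ candidates plus one verification query each, and lower-bound the success probability by the product $\frac{8}{\pi^2}P_{G_i}$ coming from the single run with $b_j=a_i$. Your write-up is in fact more careful than the paper's, which states the same chain of reasoning in a few sentences without explicitly separating the conditioning argument, the handling of wrong candidates, or the certainty of detecting $a_i=0$.
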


\begin{proof}



For any $a_i=|\{x\in\{0,1\}^{n-k}|f_i(x)=1\}|$ with $a_i\geq 1$, the algorithm of quantum counting takes  $\sqrt{2^{(n-k)}}$ queries to determine a range $R_i=\{b_1^{(i)}, b_2^{(i)},\ldots,b_{r_i}^{(i)}\}$ of $a_i=|\{x\in\{0,1\}^{n-k}|f_i(x)=1\}|$, with $|r_i|\leq 2t_a+1$. The success probability is at least $\frac{8}{\pi^2}$. Then we use Grover's algorithm to search for a solution of $f_i$ according to the possible values of goals in  $R_i$ in sequence, and after each measuring, we check whether  the result is the solution of $f_i$. So, it takes at most  $\sum_{i=1}^{r_i}  \lfloor \frac{\pi}{4}\sqrt{\frac{2^{n-k}}{b_i}} \rfloor+\lceil\sqrt{2^{n-k}}\rceil+2t_a+1$ queries for some $1\leq b_i\leq a$ and $r_i\leq 2t_a+1$. Since $a_i\in R_i$, the success probability of Grover's algorithm for computing  $f_i$ is $P_{G_i}$. Consequently, the success probability is at least
$\frac{8}{\pi^2}P_{G_i}$.

\end{proof}

\begin{remark}
Finally, we consider a special case, that is $|x\in\{0,1\}^{n}|f(x)=1\}|=a=1$. In this case, $a_i$ is $0$ or $1$ where $a_i=|\{x\in\{0,1\}^{n-k}|f_i(x)=1\}|$. So, we can use Grover's algorithm directly by just taking $a_i=1$ instead of using the algorithm of quantum counting, and then check whether the measuring result is the solution of $f_i$. Therefore, the number of queries in  the above distributed Grover's algorithm in parallel  is 
\begin{equation}
\lfloor \frac{\pi}{4}\sqrt{2^{n-k}} \rfloor.
\end{equation}

\end{remark}

\section{ Realization of Oracles with Quantum Circuits}
Constructing quantum circuit to realize an oracle is important in Grover's algorithm, since it is a key step from practical problems to applications.
As above, more oracles are required in the distributed Grover's algorithms, so how to construct quantum circuits to realize  oracles physically is useful in practice. In this section, the goal is to design an algorithm for realizing oracles and provide a corresponding quantum circuit.

More specifically,
given any Boolean function $f:\{0,1\}^n\rightarrow \{0,1\}$, its oracle $Z_f$ is defined as $Z_f|x\rangle=(-1)^{f(x)}|x\rangle$, and  the unitary operator $G$ is constructed from $Z_f$ in Algorithm \ref{Grover's Algorithm}.
 Recently, there was a method to  construct a quantum circuit for realizing oracles related to Boolean functions in the form of disjunctive normal form (DNF) proposed in \cite{ACR21}. However, their method relies on the truth table of  functions to be computed, which means that it is difficult to apply in practice, since the truth table of a Boolean function is likely not known in general. Therefore, in this section, we give a method to construct the oracle for any given Boolean function which is a conjunctive normal form (CNF), without knowing its truth table. As is known, 3SAT problem is a special case of conjunctive normal forms.

 A CNF that relies on Boolean variables $x_1,x_2,\cdots,x_n$ is of the form 
\begin{equation} \label{clause}
C_1\land C_2\land \cdots \land C_m,
\end{equation} 
where  $C_i$ (called clause) is in the form $v_1\lor v_2\lor \cdots \lor v_{k_i}$ ($i=1,2,\cdots, m$) and  $v_{j}$ (called literal) is a variable or  negation of a variable in $\{x_1,x_2,\cdots,x_n\}$ ($j=1,2,\cdots, k_i$). For example, $(x_1\lor \neg x_2)\land (\neg x_1 \lor x_4\lor \neg x_3)$ is a CNF with $2$ clauses that relies on variables $x_1,x_2,x_3,x_4$, where $k_1=2$ and $k_2=3$. In the interest of simplification, we consider functions that are 3CNF, where 3CNF is a CNF with $k_i\leq 3,i=1,2,\cdots,m$. 3CNF is vital, since the famous 3SAT problem (the first proven NPC problem) is to determine whether any given 3CNF has a solution. Of course, our method can be used similarly for realizing the oracles corresponding to general conjunctive normal forms.

How can we construct a quantum circuit for realizing the oracle related to a 3CNF? The idea is as follows. We consider such a circuit. We use $n$ qubits to represent $n$ Boolean variables, and use $\lceil \log (m+1)  \rceil$ ancillary qubits to store the number of clauses that are false. Thus, we can know the value of the function under the current assignment of variables is true if and only if the ancillary qubits is in state $|0 \rangle^{\otimes\lceil \log (m+1)  \rceil}$. In addition, we flip the phase if the ancillary qubits is $|0 \rangle^{\otimes\lceil \log (m+1)  \rceil}$, and finally restore the ancillary qubits. This idea can be formalized  by means of Algorithm \ref{Compute 3CNF}, and the corresponding circuit of Algorithm \ref{Compute 3CNF} is shown in Figure \ref{fig:Compute3CNF}.

For describing Algorithm \ref{Compute 3CNF} clearly, we first define two unitary operators $U_k$ and $U_k'$ depending on clause $C_k$ in a CNF as above Eq. (\ref{clause}) ($k=1,2,\ldots,m$), where $U_k$ and $U_k'$ act on the Hilbert space spanned by $\{|i_1,i_2,\ldots,i_n\rangle |j_1,j_2,\ldots,j_M\rangle: i_1,i_2,\ldots,i_n,j_1,j_2,\ldots, j_M\in \{0,1\} ,                    M= \lceil \log (m+1)  \rceil      \}$.

\begin{definition}
For each $k=1,\cdots,m$,
unitary operator $U_k$ is defined as: for any  $|y_1,\cdots,y_n\rangle \in \{|i_1,i_2,\ldots,i_n\rangle: i_1,i_2,\ldots,i_n\in \{0,1\}   \}$ and for any $c\in\{0,1,\cdots,m\}$, 
\begin{align}
&U_k|y_1,\cdots,y_n\rangle |c\rangle_C \\
 =& \left\{
\begin{array}{lcl}
|y_1,\cdots,y_n\rangle |c\rangle_C, & &  C_k \text{ is true under the assignment $y_1,\cdots,y_n$} ;\\
|y_1,\cdots,y_n\rangle |(c+1)\bmod (m+1)\rangle_C, & & C_k \text{ is false under the assignment $y_1,\cdots,y_n$}.
\end{array} \right.
\end{align} 

\end{definition}

\begin{definition}
For each $k=1,\cdots,m$,
unitary operator $U_k'$ is defined as: for any $|y_1,\cdots,y_n\rangle \in \{|i_1,i_2,\ldots,i_n\rangle: i_1,i_2,\ldots,i_n\in \{0,1\}   \}$ and for any $c\in\{0,1,\cdots,m\}$, 
\begin{align}
&U_k'|y_1,\cdots,y_n\rangle |c\rangle_C \\
 =& \left\{
\begin{array}{lcl}
|y_1,\cdots,y_n\rangle |c\rangle_C, & &  C_k \text{ is true under the assignment $y_1,\cdots,y_n$} ;\\
|y_1,\cdots,y_n\rangle |(c-1)\bmod (m+1)\rangle_C, & & C_k \text{ is false under the assignment $y_1,\cdots,y_n$}.
\end{array} \right.
\end{align} 

\end{definition}


\begin{algorithm}[H]
	\caption{Compute 3CNF }\label{Compute 3CNF}
	\LinesNumbered 
	\KwIn{A Boolean function $f$ that is a 3CNF with representation $C_1\land C_2\land \cdots \land C_m$.\\ An assignment  (denoted as $y_1,\cdots,y_n$) of its variables $x_1,\cdots,x_n$.}
	\KwOut{$(-1)^{f(y_1,\cdots,y_n)}|y_1\cdots y_n\rangle$}
    Initialize state $|y_1\cdots y_n\rangle|0\rangle_C$, where register $C$ is ${\lceil \log (m+1) \rceil}$-qubit.
	
	 Apply $U_1,\cdots,U_m$ in sequence. \quad\quad\qquad\qquad\qquad\qquad\qquad\qquad\qquad\qquad\qquad\qquad
	
	 Apply $Z_0$  to register $C$, where $Z_0|x\rangle_C=\left\{
\begin{array}{rcl}
-|x\rangle_C, & &x=0 ;\\
|x\rangle_C, & & x\neq 0.
\end{array} \right.$

	Apply $U_m',\cdots,U_1'$ in sequence.
\quad\quad\qquad\qquad\qquad\qquad\qquad\qquad\qquad\qquad\qquad\qquad
	
\end{algorithm}

\begin{figure}[H]
    \centering
    \includegraphics[scale=0.5]{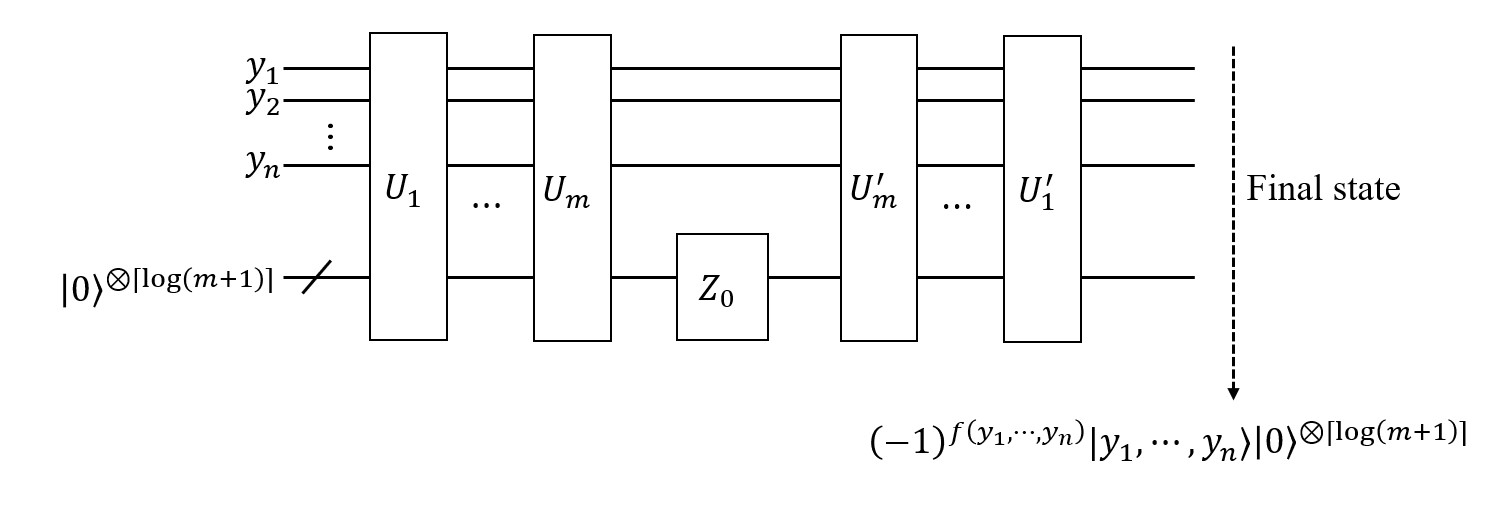}
    \caption{Circuit of Algorithm \ref{Compute 3CNF}}\label{fig:Compute3CNF}
\end{figure}

Next, we give the proof of  correctness of Algorithm \ref{Compute 3CNF}.

\begin{theorem}\label{th_oracle}
The final state of Algorithm \ref{Compute 3CNF} is $(-1)^{f(y_1,\cdots,y_n)}|y_1,\cdots,y_n\rangle|0\rangle^{\otimes\lceil \log (m+1)  \rceil}$.
\end{theorem}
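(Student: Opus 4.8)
The plan is to follow a single computational-basis input $|y_1\cdots y_n\rangle$ through the four steps of Algorithm \ref{Compute 3CNF}, exploiting the structural fact that the first register is only ever read as a control by the operators $U_k$, $U_k'$ and (implicitly) by $Z_0$, hence is never modified. First I would introduce the number of violated clauses, $N(y)=\big|\{k\in\{1,\dots,m\}:C_k\text{ is false under }y_1,\cdots,y_n\}\big|$, and record two elementary facts: (i) $0\le N(y)\le m$, so that $N(y)\bmod(m+1)=N(y)$ and, starting the counter from $0$ and incrementing, no wrap-around ever occurs; and (ii) by the definition of a CNF, $f(y_1,\cdots,y_n)=1$ iff every clause is satisfied, i.e. iff $N(y)=0$.

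Next I would analyze Step 2. Starting from $|y_1,\cdots,y_n\rangle|0\rangle_C$ and applying $U_1,\dots,U_m$ in sequence, each $U_k$ fixes the first register and adds $1\bmod(m+1)$ to the counter exactly when $C_k$ is false under $y_1,\cdots,y_n$; after all $m$ of them the counter holds $\big(\sum_{k:\,C_k\text{ false}}1\big)\bmod(m+1)=N(y)$ by fact (i), so the state is $|y_1,\cdots,y_n\rangle|N(y)\rangle_C$. In Step 3, $Z_0$ acting on register $C$ multiplies the state by $-1$ precisely when that register is $|0\rangle_C$, i.e. precisely when $N(y)=0$, which by fact (ii) is precisely when $f(y_1,\cdots,y_n)=1$; hence after Step 3 the state is $(-1)^{f(y_1,\cdots,y_n)}|y_1,\cdots,y_n\rangle|N(y)\rangle_C$, a scalar multiple of the state after Step 2.

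Finally I would observe that Step 4 undoes Step 2. Writing $V=U_m\cdots U_1$ for the operator implemented by Step 2, and noting that $U_k'=U_k^{-1}$ (both fix the first register, and on the counter they respectively add and subtract $1\bmod(m+1)$ under the same condition "$C_k$ false"), the operator implemented by Step 4 is $U_1'\cdots U_m'=U_1^{-1}\cdots U_m^{-1}=V^{-1}$. Applying $V^{-1}$ to the Step-3 state $(-1)^{f(y_1,\cdots,y_n)}\,V\,|y_1,\cdots,y_n\rangle|0\rangle_C$ pulls the scalar out and returns $(-1)^{f(y_1,\cdots,y_n)}|y_1,\cdots,y_n\rangle|0\rangle_C$, which, since register $C$ has $\lceil\log(m+1)\rceil$ qubits, is exactly the asserted final state $(-1)^{f(y_1,\cdots,y_n)}|y_1,\cdots,y_n\rangle|0\rangle^{\otimes\lceil\log(m+1)\rceil}$. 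Equivalently one can bypass the operator identity and simply note that Step 4 performs $N(y)$ decrements modulo $m+1$ on a counter currently holding $N(y)$, returning it to $0$, while leaving the global phase alone.

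There is no substantive obstacle here: the argument is bookkeeping once the counter semantics of $U_k$ and $U_k'$ are written out. The one point that warrants a line of care is the modular arithmetic — one must check that with at most $m$ clauses the forward count never reaches $m+1$, so that "$\bmod(m+1)$" is inert in Step 2 — and this is exactly fact (i). (If the statement were wanted for superposition inputs, as it is used inside Grover's algorithm, one would append an appeal to linearity; but the theorem as stated concerns a fixed basis input, so the above is complete.)
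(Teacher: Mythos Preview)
Your proposal is correct and follows essentially the same approach as the paper's proof: both track the counter register through the four steps, identify its contents after Step~2 with the number of falsified clauses, use the equivalence ``$f(y)=1$ iff no clause is falsified'' to handle $Z_0$ in Step~3, and then argue that Step~4 restores the counter to $|0\rangle_C$. The only cosmetic difference is that the paper computes the effect of $U_k'\cdots U_m'$ directly (decrementing the partial sums), whereas you phrase it via the operator identity $U_k'=U_k^{-1}$ so that Step~4 is literally $V^{-1}$; your explicit remark that $N(y)\le m$ prevents modular wrap-around is a point the paper leaves implicit.
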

\begin{proof}
Denote $y=(y_1,\cdots,y_n)$, and let $\alpha_1(y),\cdots,\alpha_m(y)$ be defined  as: $$\alpha_k(y)=\left\{
\begin{array}{rcl}
1, & &C_k \text{ is false under the assignment $y_1,\cdots,y_n$} ;\\
0, & &C_k \text{ is true under the assignment $y_1,\cdots,y_n$} ,
\end{array} \right.$$
for $k=1,\cdots,m$.
According to the definition of each $U_k$, we have
\begin{equation}
U_k|y_1\cdots y_n\rangle|c\rangle_C=|y_1\cdots y_n\rangle|c+\alpha_k(y)\rangle_C,
\end{equation} 
for $k=1,\cdots,m$ and $c=0,1,\cdots,m-1$. Therefore we have 
\begin{equation}
U_k\cdots U_1|y_1\cdots y_n\rangle|0\rangle_C=|y_1\cdots y_n\rangle|\sum_{j=1}^k\alpha_j(y)\rangle_C,
\end{equation}
for $k=1,\cdots,m$. Hence, after step 2, the quantum state is 
\begin{equation}
U_m\cdots U_1|y_1\cdots y_n\rangle|0\rangle_C=|y_1\cdots y_n\rangle|\sum_{j=1}^m\alpha_j(y)\rangle_C,
\end{equation}
where $\sum_{j=1}^m\alpha_j(y)\leq m$. Since $f(y_1,\cdots,y_n)=1$  if and only if $\sum_{j=1}^m\alpha_j(y)=0$, we in step 3 obtain 
\begin{equation}
Z_0|\sum_{j=1}^m\alpha_j(y)\rangle_C=(-1)^{f(y_1,\cdots,y_n)}|\sum_{j=1}^m\alpha_j(y)\rangle_C.
\end{equation} 
Similarly, in step 4, we can see that 
\begin{equation}
U_k'\cdots U_m'(-1)^{f(y_1,\cdots,y_n)}|y_1\cdots y_n\rangle|\sum_{j=1}^m\alpha_j(y)\rangle_C=(-1)^{f(y_1,\cdots,y_n)}|y_1\cdots y_n\rangle|\sum_{j=1}^{k-1}\alpha_j(y)\rangle_C,
\end{equation} 
for $k=2,\cdots,m$. When $k=1$, we get  
\begin{equation}
U_1'\cdots U_m'(-1)^{f(y_1,\cdots,y_n)}|y_1\cdots y_n\rangle|\sum_{j=1}^m\alpha_j(y)\rangle_C=(-1)^{f(y_1,\cdots,y_n)}|y_1\cdots y_n\rangle | 0\rangle_C.
\end{equation}  Consequently, we have completed the proof.
\end{proof}

\begin{remark}
We use ${\lceil \log (m+1)  \rceil}$ ancillary qubits to implement the transformation from $|y_1,\cdots,y_n\rangle$ to $(-1)^{f(y_1,\cdots,y_n)}|y_1,\cdots,y_n\rangle$. Moreover, the construction of operators $U_k,U_k'$ $(k=1,\cdots,m)$ only depends on $C_k$. For example, suppose $C_k=(x_1\lor \neg x_3 \lor \neg x_4)$ for some $k$, since $C_k$ is false if and only if $x_1=0,x_3=1$ and $x_4=1$, we know $U_k$ and $U_k'$ can be construted as Figure \ref{Ui}, where the operators $ADD$ and $SUB$ act as follows: 
\begin{equation}ADD|c\rangle_C=|(c+1)\bmod (m+1)\rangle_C,\end{equation}
\begin{equation}SUB|c\rangle_C=|(c-1)\bmod (m+1)\rangle_C\end{equation}
 for any $c\in\{0,1,\cdots,m\}$. The operators $ADD$ and $SUB$ can be constructed by means of using $O(\lceil \log (m+1)  \rceil)$ elementary gates with $O(\lceil \log (m+1)  \rceil) $ ancillary qubits \cite{YL04}. According to the similar method,  the construction of each $U_k$ and $U_k'$ can be generalized to general case of $C_k$ with number of variable more than 3.

\end{remark}

\begin{figure}[H]
    \centering
    \includegraphics[scale=0.5]{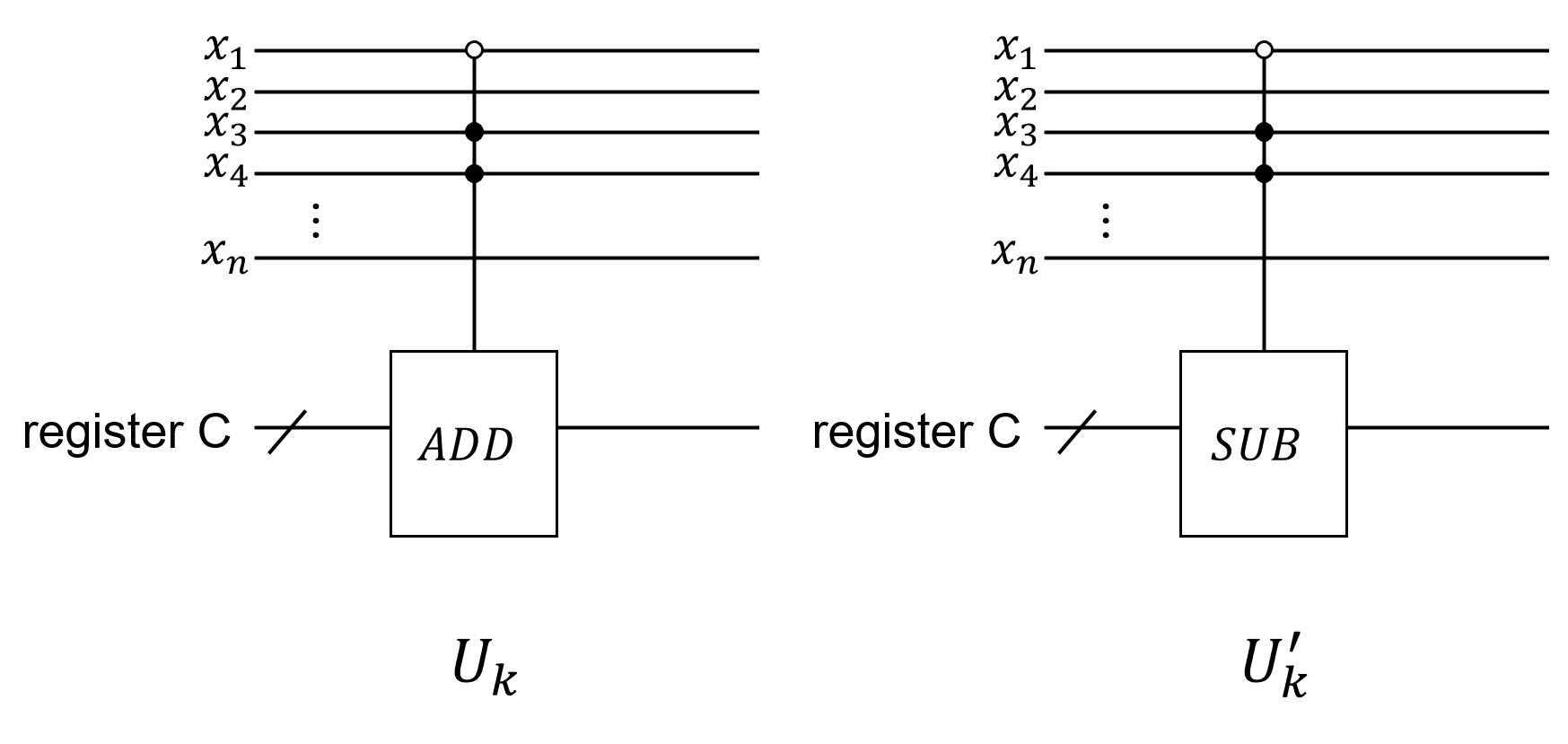}
    \caption{Constructions for $U_k$ and $U_k'$.}\label{Ui}
\end{figure}


Due to  Algorithm \ref{Compute 3CNF}, we have the following theorem.

\begin{theorem}\label{th_oracle_complexity} Let Boolean function $f: \{0,1\}^n\rightarrow \{0,1\}$ be a 3CNF with $m$ clauses. Then the oracle  $Z_f|x\rangle=(-1)^{f(x)}|x\rangle$ can be realized by using $O(m  \log m)$ elementary gates with $O(\log m) $ ancillary qubits.
\end{theorem}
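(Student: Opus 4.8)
The plan is to count the gates and ancillary qubits used by Algorithm \ref{Compute 3CNF}, which by Theorem \ref{th_oracle} correctly realizes $Z_f$. Algorithm \ref{Compute 3CNF} consists of: one application each of $U_1,\ldots,U_m$; one application of $Z_0$ on the $M$-qubit register $C$, where $M=\lceil\log(m+1)\rceil$; and one application each of $U_m',\ldots,U_1'$. So the total cost is $\sum_{k=1}^m(\mathrm{cost}(U_k)+\mathrm{cost}(U_k'))+\mathrm{cost}(Z_0)$, and the ancillary qubits are the $M$ qubits of register $C$ plus whatever each $U_k$, $U_k'$ needs internally. The core of the argument is therefore to bound $\mathrm{cost}(U_k)$ and $\mathrm{cost}(U_k')$ by $O(\log m)$ each, with only $O(\log m)$ reusable ancillas.

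First I would make precise the construction of $U_k$ sketched in the remark after Theorem \ref{th_oracle} (Figure \ref{Ui}). Since $f$ is a 3CNF, each clause $C_k$ has at most $3$ literals, so ``$C_k$ is false under $y_1,\ldots,y_n$'' is the conjunction of at most $3$ single-qubit conditions (each qubit $x_i$ equated to $0$ or $1$ depending on whether $x_i$ or $\neg x_i$ appears). Thus $U_k$ is: conditioned on these $\le 3$ control qubits (with appropriate $X$ gates to flip the polarity of negated-literal controls, applied before and after), apply the increment operator $ADD$ on register $C$ with $ADD|c\rangle_C=|(c+1)\bmod(m+1)\rangle_C$; likewise $U_k'$ uses $SUB$. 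By the cited construction \cite{YL04}, $ADD$ and $SUB$ on an $M$-qubit register can be implemented with $O(M)=O(\log m)$ elementary gates using $O(M)=O(\log m)$ ancillary qubits; making this controlled by $O(1)$ extra qubits still costs $O(\log m)$ gates (one may first compute the AND of the $\le 3$ controls into a single qubit with $O(1)$ gates and one ancilla, control the $ADD$/$SUB$ on that single qubit, and uncompute). Hence each $U_k$ and each $U_k'$ costs $O(\log m)$ elementary gates. The ancillas used inside each $U_k$, $U_k'$ are scratch qubits returned to $|0\rangle$, so the \emph{same} $O(\log m)$ ancillary qubits (plus the fixed $M=O(\log m)$ qubits of register $C$) can be reused across all $2m$ operators.

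Summing, the $2m$ operators $U_1,\ldots,U_m,U_m',\ldots,U_1'$ contribute $2m\cdot O(\log m)=O(m\log m)$ elementary gates, and $Z_0$ on the $M$-qubit register is a single multi-controlled phase flip costing $O(M)=O(\log m)$ gates (with $O(\log m)$ scratch ancillas). Total: $O(m\log m)$ elementary gates and $O(\log m)$ ancillary qubits, as claimed. The main obstacle I anticipate is purely bookkeeping rather than conceptual: being careful that the ancillas used to decompose/accumulate the $\le 3$ controls of each clause and the ancillas internal to $ADD$/$SUB$ and to $Z_0$ are all cleanly uncomputed so they can be shared, so that the ancillary-qubit count stays $O(\log m)$ rather than growing with $m$; and confirming that a controlled version of the $O(\log m)$-gate adder of \cite{YL04} is still $O(\log m)$ gates, which follows since adding $O(1)$ controls multiplies the gate count by a constant.
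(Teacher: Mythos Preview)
Your proposal is correct and follows essentially the same approach as the paper: invoke Algorithm~\ref{Compute 3CNF} (validated by Theorem~\ref{th_oracle}), bound each $U_k$ and $U_k'$ by $O(\log m)$ elementary gates and $O(\log m)$ ancillas via the $ADD$/$SUB$ construction of \cite{YL04}, bound $Z_0$ by $O(\log m)$ gates via \cite{ACR21,BBC95}, and sum. Your write-up is in fact more careful than the paper's on two points the paper leaves implicit: that the ancillas are uncomputed and hence reusable across the $2m$ operators (so the ancilla count stays $O(\log m)$ rather than $O(m\log m)$), and that adding $O(1)$ clause-controls to $ADD$/$SUB$ preserves the $O(\log m)$ gate bound.
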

\begin{proof}  By means of Algorithm \ref{Compute 3CNF}, 
the oracle  $Z_f|x\rangle=(-1)^{f(x)}|x\rangle$ can be realized by operators $U_1,U_2,\cdots,U_m,Z_0$,$U_1',U_2',\cdots,U_m'$.

 The operator $ADD$ or $SUB$ can be constructed in terms of using $O(\lceil \log (m+1)  \rceil)$ elementary gates with $O(\lceil \log (m+1)  \rceil) $ ancillary qubits \cite{YL04}. 
 Hence, each $U_k$ or $U_k'$ can be constructed by using $O(\lceil \log (m+1)  \rceil)$ elementary gates with $O(\lceil \log (m+1)  \rceil) $ ancillary qubits. $Z_0$ is a muti-qubit controlled gate and can be constructed by using $O(\lceil \log (m+1)  \rceil)$ elementary gates with fixed number of ancillary qubits \cite{ACR21,BBC95}. So, the theorem holds.
\end{proof}

\begin{remark} \label{DOC}
In the distributed Grover's algorithm we have designed, the oracles related to all Boolean subfunction $ f_i(x)=f(xy_i)$ ($i=1,2,\ldots, 2^t-1$) are needed,  where $f:\{0,1\}^n\rightarrow \{0,1\}$ is given, $y_i\in\{0,1\}^t$ is the binary representation of $i$ and $ x\in\{0,1\}^{n-t}$. If $f$ is a 3CNF, then $f_i$ is also a 3CNF and usually can be simplified with classical method  based on the known variables $y_i$ to the 3CNF of $f$. Since $f_i$ is a 3CNF, by Theorem \ref{th_oracle_complexity}, we can realize its oracle with $O(m_i \log m_i)$ elementary gates and $O(\log m_i)$ ancillary qubits, where $m_i$ is the number of clauses in the 3CNF of $f_i$. Therefore, for any positive integer $t$, the time complexity of realizing the oracles for $f_0,f_1,\cdots,f_{2^t-1}$ 
in parallel is $O(\max_{i\in\{0,1,\cdots,2^t-1\}} m_i\log m_i)$.

\end{remark}

\section{Concluding remarks}

It is still difficult to design large-scale universal quantum computers nowadays due to significant depth of quantum circuits and noise impact. Therefore, quantum models and algorithms with smaller number of input bits and shallower circuits have better physical realizability.  Distributed quantum algorithms usually require smaller size of input bits and better time complexity as well as shallower depth of quantum circuits, so it is intriguing to design distributed quantum algorithms for important quantum algorithms such as Shor's algorithm and Grover's algorithm.

In this paper, we have given a distributed Grover's algorithm. We can divide a function to be computed into $2^k$ subfunctions and then compute one of useful subfunctions to obtain the solution of the original function. For dealing with these subfunctions, we have considered two methods, i.e., serial method and parallel method. The success probabilities of two methods are same and approximate to that of Grover's algorithm. As for query times (depth of quantum circuits) in the worst case, parallel method needs at most  $\sum_{i=1}^{r_i}  \lfloor \frac{\pi}{4}\sqrt{\frac{2^{n-k}}{b_i}} \rfloor+\lceil\sqrt{2^{n-k}}\rceil+2t_a+1$ for some $1\leq b_i\leq a$, where $r_i\leq 2t_a+1$ and $t_a=\lceil 2\pi\sqrt{a}+11\rceil$, versus $\lfloor \frac{\pi}{4}\sqrt{\frac{2^{n}}{a}} \rfloor$ of Grover's algorithm, where $n$ is the number of input bits of problem, but the number of input bits for subfunctions is $n-k$. 


In particular, if $|x\in\{0,1\}^{n}|f(x)=1\}|=a=1$, then our distributed Grover's algorithm in parallel only needs $\lfloor \frac{\pi}{4}\sqrt{2^{n-k}} \rfloor$ queries, versus $\lfloor \frac{\pi}{4}\sqrt{2^{n}} \rfloor$ queries of Grover's algorithm.

For any Boolean function $f$ with CNF having $m$ clauses, we have proposed an efficient algorithm of constructing quantum circuit for  realizing the oracle $Z_f|x\rangle=(-1)^{f(x)}|x\rangle$ corresponding to $f$. The algorithm's time complexity is $O(m  \log m)$. According to Remark \ref{DOC},  the time complexity of realizing the oracles in our distributed Grover's algorithm 
in parallel is at most $O( m\log m)$, where $m$ is the number of clauses in the Boolean function to be computed.


Other  problems worthy of further consideration   are how to design distributed quantum algorithms for  Deutsch-Jozsa problem, Hidden subgroup problem,  and  decomposition of large number. We would like to consider it in sequent study.    In particular,  we would like to realize these distributed quantum algorithms with ion trap quantum computers  experimentally in near future.

\section*{Acknowledgements}

We thank Dr. Markus Grassl for pointing out a problem of query times in Theorem 1.
 This work is supported in part by the National
Natural Science Foundation of China (Nos.   61876195, 61572532) and the Natural Science
Foundation of Guangdong Province of China (No. 2017B030311011).

\end{document}